\newcommand{\mailto}[1]{\href{mailto:#1}{\nolinkurl{#1}}}
\title{Bootstrapping a stable computation token}
\author{
Jason Teutsch\\
\emph{Truebit}\\
\mailto{jt@truebit.io}
\and
Sami M\"{a}kel\"{a}\\
\emph{Truebit}\\
\mailto{mrsmkl@gmail.com}
\and
Surya Bakshi\\
\emph{UIUC/Truebit}\\
\mailto{sbakshi3@illinois.edu}
}
\theoremstyle{remark}
\newtheorem*{claim}{Claim}
\theoremstyle{theorem}
\newtheorem*{prop}{Proposition}
\newcommand{\TRU}{\textsf{TRU}\xspace}
\newcommand{\CPU}{\textsf{CPU}\xspace}
\newcommand{\GPU}{\textsf{GPU}\xspace}
\newcommand{\DAO}{\textsf{DAO}\xspace}
\newcommand{\USD}{\textsf{USD}\xspace}
\newcommand{\XYZ}{\textsf{XYZ}\xspace}
\newcommand{\X}{\textsf{X}\xspace}
\newcommand{\Y}{\textsf{Y}\xspace}
\begin{document}

\maketitle

\begin{abstract}
We outline a token model for Truebit, a retrofitting, blockchain enhancement which enables secure, community-based computation.  The model addresses the challenge of stable task pricing, as raised in the Truebit whitepaper, without appealing to external oracles, exchanges, or hierarchical nodes.  The system's sustainable economics and fair market pricing derive from a mintable token format which leverages existing tokens for liquidity.  Finally, we introduce a governance layer whose lifecycles culminates with permanent dissolution into utility tokens, thereby tending the network towards autonomous decentralization.
\end{abstract}

\section{Initializing Truebit}

Bitcoin, from its outset, leveraged an egalitarian distribution method enabling anyone 
with an Internet-connected computer to obtain its native tokens~\cite{Nak08}.  The system's automated mining process indiscriminately, albeit probabilistically, distributes digital wealth in the form of bitcoins to individuals who run a Bitcoin client on their local machines.  While the practicalities of scaling, demand, and human nature introduced some complexities beyond those conceived in Bitcoin's original specifications \cite{bitcoin}, the system's remarkable minimization of \emph{politics}, or complex formal dependencies on the actions or judgements of privileged nodes, permit the underlying protocol to function under simple, mathematical assumptions.

Tokens born from and regulated by smart contracts, in contrast to Bitcoin's Nakamoto consensus,  present unique bootstrapping challenges with respect to both adoption and distribution.  Indeed, Bitcoin's value proposition of ``generate your own cash and then spend it'' doesn't translate easily into systems where intended \emph{consumers} must supply such cash.  We witness an abundance of miners and stakers offering computation power and capital in the blockchain space, however the corresponding consumers for these services, when distinct from the service providers, remain far less ubiquitous.  Some protocols, like Livepeer's MerkleMine~\cite{merklemine}, have dispensed tokens through computational work at the smart contract layer, yet sustainable distribution through apolitical function lingers elusively on blockchain's horizon.  In light of the relatively low demand for decentralized services, we concentrate on an economic design which minimizes friction and politics for consumers without sacrificing security.

Consumers generally find convenience in predictable pricing as acquisition of an asset often does not coincide with its consumption.  Consider a pilot who purchases sufficient aircraft fuel for a trip from Los Angeles to Tokyo and takes off.  Halfway through his journey, the price of fuel increases by 20\%. Consequently, a corresponding $1 - 1/(120\%)$ fraction of the pilot's remaining store vanishes from the gas tank, inconveniently diverting his course to Hawaii.  While the physical world may prevent this particular scenario from occurring, the volatile world of cryptocurrency consumption provides no such guarantees.  The pilot in this example requires a fixed amount of fuel, not a stable amount of fuel relative to the US dollar (\USD) (compare with \cite{maker} and \cite{tether}).  This example suggests the need for a kind of affordable, stable token independent of \USD (Section~\ref{sec:basic}). Both Truebit's stable token and fiat currency may correlate with the price of electricity (Section~\ref{sec:pricingtasks}). We highlight that the Truebit protocol model assumes no distinguished authority nodes and, as such, achieves not only a trustless computation system but a decentralized one based on simple security assumptions and hierarchy-free pricing.

Any new network which requires consumers to pay for services with a token for which they \emph{a priori} lack access faces a distribution problem.  This fundamental initialization challenge exists even for \emph{mineable} smart contract-based tokens, such as those used in Truebit~\cite{TR17}.  Some blockchain projects politically circumvent this utility dilemma through \emph{premining}, or initial distribution to a select group of individuals or institutions, however a private premine alone does not transform the system into a public good.  Sections~\ref{sec:pricingtasks}, \ref{sec:whitelistinggame}, and \ref{sec:governancegame} describe premining alternatives which leverage existing liquid tokens for distribution.  This technique reduces friction for consumers, who use assets readily available to them, while offering a potential source of revenue for project management and enhanced collaboration.

The governance game, as described in Section~\ref{sec:governancegame}, determines in the short run tokens for use in bootstrapping and in the long run incentives for those holding governance tokens to convert them into utility tokens.  Upon conversion of all governance tokens, a fully decentralized, yet upgradable system persists (Section~\ref{sec:upgrade}).  Teutsch and Reitwie{\ss}ner predicted some use cases for Truebit in early 2017~\cite[Section~7]{TR17}.  Armed with a modern imagination, we now bring these ideas to a practical test.

\section{Protocol review} \label{sec:review}
This exposition describes a deployment method for the Truebit protocol~\cite{TR17}, a blockchain enhancement which enables smart contracts to securely execute larger computations than the minimal gas limit permits~\cite{LTKS15}.  \emph{Task Givers} submit computational tasks, while \emph{Solvers} and \emph{Verifiers} ensure correct results in exchange for token rewards.  Each task may have several Verifiers but only one Solver.  The protocol runs on a unanimous consensus protocol among all Verifiers; there are no privileged or distinguished nodes. We shall largely treat the Truebit protocol as a black box, however the interested reader may refer to the project whitepaper \cite{TR17} and code \cite{truebitos} for full details.   While acquaintance with these specifications may prove useful, we aim to keep the present discussion streamlined and self-contained.

\begin{figure}[t]
\begin{center}
\begin{tikzpicture}
  [shorten <=0pt,shorten >=0pt,>={Stealth[length=7pt]},thick=3pt]
   \tikzstyle{contract} = [rectangle, minimum size=1cm, inner sep=10pt, outer sep=1ex, font={\normalsize \tt}, fill=none, thick, draw]
   \tikzstyle{subcontract} = [rectangle, inner sep=2pt, outer sep=3pt, font={\normalsize \tt}, fill=none, thick, dashed, draw]
   \tikzstyle{entity} = [dashed, star, star points=2, star point ratio=0.8, minimum size=1.7cm, inner sep=4pt, outer sep=1ex, font={\normalsize \sf}, fill=none, thick, draw]

   \node[contract](protocol) at (0,0) {\Large Truebit protocol};
   \node[entity] (tasking) at (-3,3) {\large Tasking};
   \node[entity] (staking) at (3,3) {\large Staking};
   \node[entity] (reward) at (0,-3) {\large Reward};
   
   \draw[->] (tasking) to node[yshift=-1ex,left] {\begin{tabular}{c}\CPU \\ \X \end{tabular}} (protocol);
   \draw[->] (protocol) to node[right] {\TRU} (reward);
   \draw[->] (staking) to node[yshift=-1ex, right] {\begin{tabular}{c}\TRU \\ \Y \end{tabular}} (protocol);
\end{tikzpicture}
\caption{Simplified token functions in Truebit.  \CPU  and \TRU are primary tokens (Sections~\ref{sec:basic}), whereas ``\X'' and ``\Y'' denote indeterminate, or ``variable,'' tokens.} \label{fig:functions}
\end{center}
\end{figure}
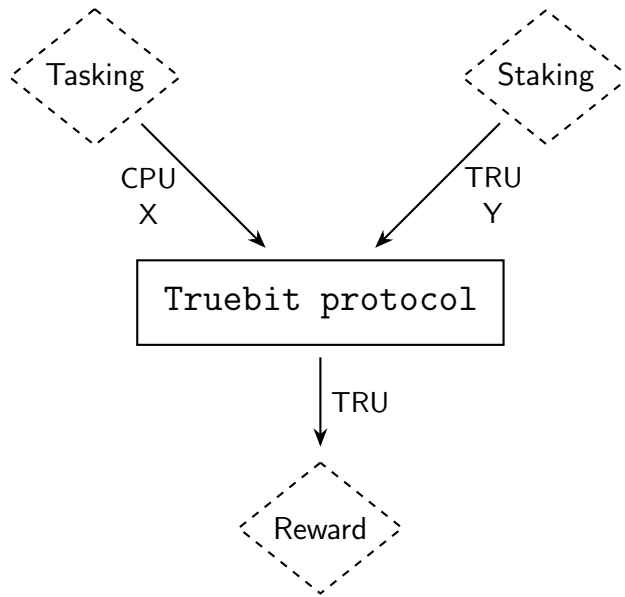

Tokens have three functions in Truebit: paying for tasks, staking to participate, and rewarding Solvers and Verifiers (Figure~\ref{fig:functions}).  Just as the identities of Task Givers, Solvers, and Verifiers may either be disjoint or overlap, so too may the actual \emph{tasking}, \emph{staking}, and \emph{reward} tokens.  The operating model, depicted in Figure~\ref{fig:model}, enables the possibility of disjoint tokens by minting rewards.

Solvers and Verifiers in Truebit must stake token deposits in order to participate in the network.  This reserves financing for dispute resolution, in case some Verifier disagrees with a Solver's solution.  Staking penalties, enforced in the framework of a \emph{verification game}, disincentivize intentionally false or spammy solutions from Solvers as well as false alarms from Verifiers.  When~$n$ Verifiers verify a task, then total token reward decreases by a multiplicative factor of $2^{n-1}$, an \emph{exponential dropoff} which dissuades Sybil attacks, and the protocol distributes the reward evenly among all participating Verifiers.  In Section~\ref{sec:random}, we discuss an alternative mechanism with linear payouts rather than exponential dropoff.  The Truebit protocol as described in the whitepaper uses a probabilistic reward scheme which pays Verifiers for discovering bugs.  This \emph{forced error} mechanism occasionally rewards Solvers for providing wrong answers and ensures that Verifiers get some rewards.  The whitepaper describes a \emph{jackpot repository} which cumulatively stores a portion of fees paid from each task in order to fund reward payouts in the case of forced errors.

The current Truebit implementation~\cite{truebitos} can process task data from various sources, including IPFS \cite{ipfs}.  If the raw data corresponding to the input IPFS hash is not made publicly available~\cite{Teu17}, then Truebit effectively processes private data via some subset of Solvers and Verifiers.  A task which makes use of such a private input is called a \emph{private task}.  Goddard \cite{god16} first pointed out this innate privacy feature of interactive verification, and the Arbitrum project later popularized it~\cite{KGCWF18}.  The current Truebit implementation~\cite{truebitos} processes both public and private inputs uniformly, cost-effectively, and faster than the authors of the original whitepaper could have imagined \cite{truebitjit}.

\section{Basic token operations} \label{sec:basic}

We express the components of the token model through a series of self-contained but cumulative updates to the original Truebit whitepaper~\cite{TR17}.  As indicated in Figure~\ref{fig:functions}, the system primarily relies on \TRU tokens for staking and uses \CPU for task payments.  A hard-wired preference for \TRU rewards aids predictable task pricing, and the multi-token model achieves predictable economic effects by isolating the properties and functions of each system component.

The Truebit whitepaper describes an \emph{underpricing attack} in which a Task Giver issues a task at a negligible price so that \emph{rational miners}, or Solvers and Verifiers motivated primarily by token rewards, lack incentive to solve it  \cite[Section~5.5]{TR17}.  The Task Giver then solves the task himself, gets a bogus answer, and breaks the system's security.  More simply, a Task Giver might, if left to his own devices, underprice a task simply to cut costs.  Hence the protocol cannot allow Task Givers to price tasks under the assumption of rational miners.  For similar reasons, due to potential Sybil identities, individual Solvers and Verifiers cannot securely set task costs for other participants.

In light of these requirements, Truebit relieves Task Givers from pricing responsibilities by fixing the cost for one computation cycle at one \CPU.  Thus Task Givers holding \CPU may consume them at any time without exposure to price variation, while Solvers and Verifiers absorb potential market fluctuations through the \TRU token (Section~\ref{sec:pricingtasks}).  This two token model decouples stability and market functions in a way reminiscent of GasToken~\cite{gastoken} and without the protocol or governance system determining inflationary or deflationary rates; we assume that Task Givers holding \CPU condone indeterminate token supply so long as it does not inhibit their ability to issue tasks and obtain correct computational results.  

As a guiding design principle, the value of tokens paid into the Truebit protocol must not exceed the value of tokens paid out as rewards, lest Task Givers accumulate \TRU rewards for spamming the network by solving their own tasks.  Indeed, one can efficiently pass entire task rewards to oneself via private tasks as detailed in Section~\ref{sec:pricingrewards}.  Regardless of the quantity of \TRU received as a reward, the reward received per computation step is always equivalent to one~\CPU according to the pricing provided by the reward recipient.

We expect that many Verifiers participate in the network in order to guarantee correctness of their own tasks, however these same Solvers and Verifiers may add security to other Task Givers by leaving their Verifier or Solver client running between solving their own tasks.  Thus participants who help the Truebit community may effectively use the system at negligible cost (ignoring network gas and real electricity expenses).

\subsection{Pricing rewards} \label{sec:pricingrewards}

Rational miners wish to receive no less than the market rate for their services.  We elaborate on Truebit's staking method in order to meet this requirement.  Each participating Solver/Verifier, $V$, does the following prior to solving tasks.
\begin{enumerate}[label=\arabic*.]
\item $V$ specifies a \emph{local price} per computation step in \TRU, $p$, to be paid to Solvers and to Verifiers, and
\item stakes \emph{both} \CPU and \TRU tokens in a ratio of 1 to $p$.
\end{enumerate}
For conceptual clarity, we have denominated reward payouts with respect to number of computation steps, however in practice the local price might represent \emph{expected} price per computation step when taking into account the sporadic, cumulative nature of jackpot payouts.

Whenever a Solver/Verifier $V$ earns a reward, the protocol \emph{mints} to $V$ the \TRU reward allocation at the median price among all local prices, while the protocol \emph{burns}, or permanently destroys, the Task Giver's corresponding payment.  Median pricing, as opposed to localized pricing, avoids a potential tragedy of the commons wherein parties maximize their own rewards at the detriment of the tasking conversion price discussed in the next section.  The system expects the local price to equal the ``true'' cost of a computation step in \TRU, but what's to stop $V$ from collecting unbounded rewards by choosing $p$ as large as possible?  The protocol incentivizes each Solver/Verifier $V$ to provide correct market pricing through \emph{Staking Monitors} who can \emph{swap} tokens against $V$'s stake at $V$'s local price $p$.  In other words, anyone holding a \CPU token can exchange it for $p$~\TRU tokens against $V$'s deposit, less a fixed, \emph{universal staking conversion fee} of, say, $20\%$ to dissuade swapping against reasonable pricing.  In more detail, a \emph{staking swap} proceeds as follows.
\begin{enumerate}[label=(\alph*)]
\item A Staking Monitor~$M$ targets the deposit of some Solver/Verifier~$V$ with local price~$p$.

\item $M$ submits $n$ \CPU (resp.\ \TRU) to the Staking contract.

\item Assuming $V$ had staked at least $np$ \TRU (resp.\ $n/p$ \CPU):
\begin{enumerate}[label=\roman*.]
\item $M$ obtains $(1-c)np$ \TRU (resp.\ $[1-c]n/p$ \CPU) from $V$'s stake, and 
\item $V$ retains the $n$ \CPU (resp.\ \TRU) submitted by $M$.
\end{enumerate}
\end{enumerate}
The protocol requires a \emph{price bonding delay} of, say, 24 hours between the time that a Solver/Verifier initially commits or updates a price per computation step and the time that the Solver/Verifier can participate and receive rewards.  This delay enables Staking Monitors to confirm the Solver/Verifier's local price before it applies in the system.  At all times, the Solver/Verifier must have stake in the system in order to participate, however stake may be withdrawn at any time, including during the price bonding delay period.  This allows Solvers and Verifiers to protect their deposits against potential price volatility.

In Section~\ref{sec:pricingtasks}, we shall construct a more cost efficient means of converting from \TRU to \CPU tokens; staking swaps are only preferable in cases of outlier local pricing.  Thanks to \TRU's designation as reward token, one can achieve the reverse exchange direction, from \CPU to \TRU, via the method of private tasks.

\paragraph{Private tasks.}
In cases where only a hash of the input data appears on-chain, some nodes may never have access to a task's raw input.  Private tasks offer rudimentary but effective security by obscurity and arise whenever the input to a Truebit task points to data outside the blockchain or permanent contract storage, e.g.\ through reads and writes to IPFS.  For example, a Task Giver can supply an IPFS hash as input to Truebit's on-chain filesystem without actually uploading any public data to IPFS.

A Task Giver who submits a private task can effectively guarantee that he will be the only Solver and Verifier by ensuring that no one else has access to the raw input data.  As no other party can correctly perform the requested computation, the Task Giver could win any verification game~\cite{TR17} that arises and therefore destroy the deposit of any opposing party.  Thus, by issuing a private task and also solving it, the Task Giver can efficiently burn \CPU and in its place receive \TRU as reward.

\paragraph{Jackpot spiral.}
The astute reader may notice that minting \TRU rewards ``on-the-fly,'' permits periodic payment to Verifiers without requiring the existence of a jackpot repository to accumulate the necessary funding.  In the Truebit whitepaper~\cite{TR17}, the finite quantity of tokens stored in a jackpot repository bounds the maximum incentive for Verifiers and hence the size of the largest, processable, secure computation.  With \TRU minting, in contrast, the Truebit protocol can, in theory, reward Solvers and Verifiers for performing tasks of any size.  Since the construction above vanquishes the incentive bounding parameter, the \TRU token itself now becomes a scalability solution.

Viewed another way, the design mitigates risks by eliminating politics of replenishing and growing the jackpot repository.  Karen Teutsch pointed out that a finite jackpot repository without active, altruistic management is susceptible to \emph{jackpot spiral}.  Suppose that the Truebit protocol were to use a finite jackpot repository instead of a mintable \TRU.  If the value of \TRU according to the pricing contract were to decrease for any reason, or if the jackpot repository's drops due to normal variance, then a previously executable task could become out of reach for the Truebit protocol.  In all likelihood, this means that some DApp which relies on Truebit would stop working, which causes the price of \TRU to drop further, which causes additional DApps to fail, which causes the price of \TRU to drop further, resulting in an undesirable, hyperinflationary feedback loop.

\subsection{Tasking token economics} \label{sec:pricingtasks}

Solver/Verifiers may wish to redeem rewards by issuing tasks.    While Solvers and Verifiers receive rewards in \TRU, Task Givers pay for computations with \CPU.  The system therefore requires an efficient means of converting \TRU into \CPU.  Let us inspect the economics of this process.

The system prices \emph{tasking conversions} from \TRU in \CPU using the median over all bonded Solver/Verifiers' local prices (Section~\ref{sec:pricingrewards}).  In short, Solvers, Verifiers, and Task Givers can burn \TRU into a \emph{tasking conversion contract} which instantly mints back an equivalent quantity of \CPU tokens in the sense that one \CPU token always pays for one computation step.  More precisely, the protocol mints a quantity of \CPU equal to the number of \TRU tokens burned divided by the median local price.  Note that a smart contract can efficiently maintain this median price in contract storage.

We now turn our attention to the system's relationship with external tokens and fiat currency.  The protocol supports tasking conversions from tokens other than \TRU as illustrated in Figure~\ref{fig:model}.  A \emph{Holder} in possession of $\XYZ$ tokens, who wishes to obtain \CPU
\begin{enumerate}
\item stakes both \XYZ and \TRU in the tasking conversion contract, and
\item provides an exchange rate between \XYZ and \TRU.
\end{enumerate}
Analogous to the procedure for Staking Monitors (Section~\ref{sec:pricingrewards}), any Tasking Monitor can swap the Holder's deposit from \XYZ to \TRU or \TRU to \XYZ before the end of pricing bonding period.  If no Tasking Monitor swaps the Holder's deposit in the allotted time, then the tasking contract proceeds to convert the \XYZ deposit into \CPU at the  rate in Item~2 composed with the median price described in the previous paragraph.

The above construction assumes that the \XYZ token has sufficient liquidity that a Tasking Monitor can not only estimate the correct price for \XYZ but also has access to \XYZ tokens with which to swap.  Secondly, we assume that the Holder has some access to a modest number of \TRU tokens, either from friends, through a Uniswap contract \cite{uniswap}, or via whitelisted activities (Section~\ref{sec:bootstrapping}).  Lastly, the amount of \TRU posted must be enough to make conversion worthwhile to the Tasking Monitor, e.g.\ equal to the Verifier \TRU deposit for some standard task.  The total value of the \TRU deposited, however, need not equal the value of the \XYZ deposit.  For example, suppose that an \XYZ Holder wants to mint 100 \CPU tokens and that the Holder's declared exchange rate is 2 \XYZ per 1 \TRU.  The Holder deposits 400 \XYZ tokens but only 10 \TRU tokens into the external conversion contract.  A monitoring agent may then exchange 20 \XYZ for the the 10 \TRU tokens, or vice-versa.

We remark that an \XYZ holder may burn \XYZ tokens repeatedly into the tasking conversion, using the same \TRU deposit, to obtain additional \CPU tokens.
Since this operation increases the net supply of tokens in the system, care must be taken to ensure that the supply grows in a controlled manner.  We shall explore this matter further in Section~\ref{sec:whitelistinggame}.

\paragraph{The value of \CPU in fiat.}
While one \CPU token always pays for one Truebit task, the exchange rate between \CPU and \USD may fluctuate over time.  On one hand, a \CPU price increase may incentivize Task Givers to adjust their \CPU purchases, task consumptions, and \TRU reward accumulations, but it also attracts rational miners and therefore maintains system integrity.  A \CPU price decrease, on the other hand, requires more detailed analysis to justify profitability for Solvers and Verifiers.

A \TRU Holder has options to hodl, sell, convert to \CPU, or, following the latter case, issue a task.  Consider the following pair of basic economic strategies for an active Solver/Verifier.
\begin{description}
\item{\emph{Strategy 1.}} Convert \CPU as soon as it's earned.  Use a price per computation step close the median \emph{plus} 20\% in order to maximize the amount of \CPU obtained during conversion without attracting the attention of a Staking Monitor.

\item{\emph{Strategy 2}.} Hodl earned \TRU.  Use a price per computation step close the median \emph{minus} 20\% in order to drive the median price of \TRU relative to \CPU as high as possible without attracting the attention of a Staking Monitor.
\end{description}
Whether or not \TRU inflation occurs relative to \CPU depends on whether the majority of Solver/Verifiers select Strategy~1 or Strategy~2.

Below we describe reasonable conditions under which Strategy~2 is the long-run, dominant strategy for rational miners.  In a nutshell, the set of hodlers following Strategy 2 eventually ends up with all the tokens while others are spending them on tasks.  This decrease in market \CPU supply ultimately drives up the price of \CPU relative to \USD.
\begin{prop}
Suppose that initially there exist $n$ \CPU tokens, including the equivalent value in \TRU tokens according to the median price, and
\begin{enumerate}[label=(\alph*)]
\item at least $p$ fraction of Solver/Verifiers follow Strategy~2,
\item on average, no more than (the equivalent of) $x < p$ new \CPU tokens are generated during each epoch of $n$ tasks (as a result of either \TRU deflation or external token conversions).
\end{enumerate}
Then the number of tasks until the set of Solver/Verifiers following Strategy~2 hold all of the tokens is
\[
\frac{n}{p-x}.
\]
\end{prop}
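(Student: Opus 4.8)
The plan is to follow a single bookkeeping quantity through time. Let $S_t$ denote the \emph{circulating supply}: the total value, measured in \CPU via the median price, of all \TRU and \CPU not held by Solver/Verifiers following Strategy~2, recorded after $t$ tasks have been processed. The initial total supply is $n$ \CPU-equivalents, and in the worst case (Strategy~2 followers start empty-handed) $S_0 = n$; starting with a smaller $S_0$ only shortens the process, so $n/(p-x)$ will in any case be the relevant bound. The event ``the set of Strategy~2 followers holds all of the tokens'' is precisely $S_t = 0$, since at that instant their holdings equal the entire (possibly inflated) supply. Thus it suffices to show $S_t$ reaches $0$ after $n/(p-x)$ tasks.

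Next I would derive a per-epoch recurrence for $S_t$. Group the tasks into epochs of length $n$ and normalize the unit of computation so that one epoch turns over $n$ \CPU-equivalents: by the ``value in equals value out'' design principle of Section~\ref{sec:basic}, each epoch burns $n$ \CPU-equivalents of task payments and mints $n$ \CPU-equivalents of \TRU rewards. The burned payments are drawn entirely from the circulating supply, because Strategy~2 followers hodl and never issue tasks, so this removes $n$ from $S_t$. Of the $n$ freshly minted reward units, at least a $p$ fraction accrues to Strategy~2 followers — who take at least their proportional share (using the even, median-priced reward distribution recalled in Sections~\ref{sec:review} and \ref{sec:pricingrewards}) and then never release it — so at most $(1-p)n$ re-enters circulation. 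Finally, assumption~(b) injects at most an $x$-fraction of an epoch's throughput, i.e.\ at most $xn$, of new \CPU-equivalents into circulation. Combining these three effects gives $S_{t+n} \le S_t - n + (1-p)n + xn = S_t - (p-x)n$.

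Iterating over $k$ epochs and invoking assumption~(b) in its ``on average'' form — the cumulative injection over $k$ epochs is at most $kxn$, so the telescoped bound is $S_{kn} \le n - k(p-x)n$ — we see that $S_{kn} = 0$ once $k \ge 1/(p-x)$, that is, after $kn = n/(p-x)$ tasks. Since $x < p$ the denominator is positive, so this horizon is finite; and if the hypotheses in (a) and (b) are taken to hold with equality on average, the bound is attained, giving the stated value exactly.

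The step I expect to be the main obstacle is the middle one: making the accounting airtight. One must argue carefully that (i) every burned \CPU genuinely leaves the circulating pool and none is sourced from Strategy~2 balances, which uses that ``hodl'' entails issuing no tasks; (ii) the $p$ fraction of rewards captured by Strategy~2 followers is permanently removed from circulation, which leans on ``hodl'' meaning no sales within the horizon together with reward shares being at least proportional to participation; and (iii) the normalization reconciling ``an epoch of $n$ tasks,'' the total reward minted per epoch, and the units in which the quantity $x$ is expressed, so that the clean recurrence $S_{t+n} \le S_t - (p-x)n$ really holds. Once that recurrence is established, the conclusion is immediate arithmetic.
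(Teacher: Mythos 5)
Your argument is correct and reaches the stated bound, but it takes a genuinely different route from the paper. The paper works in the time domain: it computes the $n/p$ tasks needed for the Strategy~2 cohort (absorbing tokens at rate $p$ per task) to collect the initial $n$ tokens, observes that $xn/p$ fresh tokens appear meanwhile, iterates this "chase the inflating supply" step, and sums the resulting geometric series $\frac{n}{p}\sum_{k\ge 0}(x/p)^k = \frac{n}{p-x}$. You instead work in the supply domain: you track the circulating (non--Strategy~2) supply $S_t$, show it drains by a net $(p-x)n$ per epoch via the burn/mint/injection bookkeeping, and telescope the linear recurrence to get $n/(p-x)$ in one step. The two computations are algebraically equivalent, but your version buys a couple of things: it avoids the series entirely, and it handles the deflationary case $x<0$ uniformly (the paper treats $x<0$ as a separate case and only states the weaker bound $n/p$ there, whereas your recurrence gives $n/(p-x)$ directly). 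What the paper's phrasing buys in exchange is a more vivid picture of why the answer is a geometric series — each round of newly minted supply must itself be chased down. The accounting caveats you flag at the end (payments sourced only from circulation, proportional reward capture, unit normalization) are exactly the implicit modeling assumptions the paper's proof also relies on without comment, so you have not introduced any new gap; both arguments share the same mild idealization that the final, partial epoch can be treated fractionally.
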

\begin{proof}
The expected time, as measured in tasks for the set of Solvers following Strategy~2 to acquire $n$ tokens is $n/p$.  During this epoch, $xn/p$ new \CPU tokens (including \TRU equivalent) are created, and it takes $xn/p^2$ time to acquire them.  By iterating this calculation, we see that the time required to acquire all tokens converges to the geometric series
\[
\frac{n}{p} + \frac{xn}{p^2} + \frac{x^2n}{p^3} + \dotsb
= \frac{n}{p} \cdot \prod_{k=0}^\infty \left(\frac{x}{p}\right)^k
= \frac{n/p}{1-x/p}
= \frac{n}{p-x}
\]
whenever $0 \leq x/p < 1$.  If $x < 0$, then the expected time to collect all the tokens is bounded above by $n/p$.
\end{proof}
Due to the median pricing scheme for tasking conversions, at least half of Solver/Verifer nodes must follow the deflationary scheme from Strategy~2 in order for \TRU value to increase relative to \CPU.  Solvers and Verifiers must each weigh pros and cons in deciding local prices.  A median increase in \TRU value relative to \CPU effectively increases the available \CPU supply.  While an increase in \TRU price grants \TRU Holders greater purchase power and attracts Task Givers via reduced \USD-equivalent prices, the change also reduces \USD-equivalent rewards for Solvers and Verifiers.  Thus local pricing brings into play both short-term economic and long-term network effects.  We emphasize that the protocol's market values for \TRU and \CPU tokens neither appeal to external price oracles nor exchanges, nor do they depend on specific details of the reward mechanism (e.g.\ number of Verifiers per task).

\paragraph{Effect of exchange markets.}
We have argued that the Truebit token system functions as intended without exposure to external markets, however in general the protocol cannot guarantee that such markets will not materialize.  We now consider potential effects of liquid exchanges and argue that they do not disrupt the construction.  There are four possibilities depending on which subsets of the two tokens, \TRU and \CPU, are \emph{tradeable} on public exchanges.  Recall that a \TRU holder may convert tokens to \CPU by using the protocol's built-in options contract as opposed to trading \TRU on an \emph{exchange}.

We have already considered the case where neither \TRU nor \CPU is tradable.  Now let us assume a liquid market for both \TRU and \CPU, and assume that an exchange value exists for each token.  Let $\USD(\TRU)$ [resp.\ $\USD(\CPU)$] denote the exchange market value for \TRU [resp.\ $\CPU$], and let the number $r$ be the \emph{tasking conversion rate} such that $r$ \TRU's convert to 1 $\CPU$.
\begin{claim}
Assume a fixed tasking conversion rate~$r$.  Then $\USD(\CPU)$ tends towards $r \cdot \USD(\TRU)$.
\end{claim}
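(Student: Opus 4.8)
The plan is a two-sided no-arbitrage argument: use the protocol's built-in \TRU-to-\CPU conversion (the ``options contract'') for one inequality and the private-task mechanism of Section~\ref{sec:pricingrewards} for the other, then let transaction friction shrink to zero in accordance with the paper's accounting conventions. Fix the conversion rate~$r$ and suppose the market is out of this equilibrium, say $\USD(\CPU) > r\cdot\USD(\TRU) + \delta$ with $\delta$ absorbing gas and other frictions. A rational agent then buys $r$ \TRU on the exchange for $r\cdot\USD(\TRU)$ dollars, burns them into the tasking conversion contract to mint $1$ \CPU, and sells that \CPU for $\USD(\CPU)$ dollars, pocketing $\USD(\CPU)-r\cdot\USD(\TRU)>\delta>0$. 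Iterating raises exchange demand for \TRU and exchange supply of \CPU, so by the ordinary supply-and-demand response of a liquid market $\USD(\TRU)$ is pushed up and $\USD(\CPU)$ down until the spread is at most~$\delta$.

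For the reverse inequality $\USD(\CPU) < r\cdot\USD(\TRU) - \delta$, I would have the agent buy $1$ \CPU cheaply on the exchange, issue a private task and win the resulting verification game so as to burn that \CPU and collect a \TRU reward of (essentially) $r$ tokens by the median-pricing rule, and then sell those for $r\cdot\USD(\TRU)$ dollars --- again a profit exceeding~$\delta$ --- which pushes $\USD(\CPU)$ up and $\USD(\TRU)$ down. Putting the two directions together forces $\lvert\USD(\CPU)-r\cdot\USD(\TRU)\rvert \le \delta$ at equilibrium, and taking $\delta\to 0$ (gas and electricity being treated as negligible throughout the paper) yields $\USD(\CPU)\to r\cdot\USD(\TRU)$.

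The main obstacle is that the two conversion legs are not symmetric, so I would be careful not to bury this asymmetry in a single symbol~$\delta$. The \TRU$\to$\CPU leg is an instantaneous contract call, whereas the \CPU$\to$\TRU leg requires running a verification game to completion, with delay, stake lock-up, gas, electricity, and the assumption that the task input can be kept genuinely private; the effective friction on the lower-bound side is therefore materially larger, and tightness of the two-sided bound really rests on the modelling convention that these operational costs are small relative to the token prices in play. Two further caveats I would state explicitly: the argument pins down only the \emph{ratio} $\USD(\CPU)/\USD(\TRU)$, so which price actually moves (and how quickly) depends on the relative depth of the two markets and on exogenous demand; and the ``fixed~$r$'' hypothesis is an idealization, since $r$ is the median local price and drifts as Solver/Verifiers re-bond --- so strictly the convergence should be argued over a window short enough that $r$ is stable, or reinterpreted as tracking a slowly moving target.
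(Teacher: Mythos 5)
Your argument is correct, and the first direction ($\USD(\CPU) > r\cdot\USD(\TRU)$) matches the paper's: rational actors convert \TRU into \CPU rather than buying \CPU on the exchange, so the market cannot sustain the gap. Where you genuinely diverge is the reverse direction. The paper does \emph{not} construct a \CPU-to-\TRU arbitrage at all; it observes that when $\USD(\CPU) < r\cdot\USD(\TRU)$, rational actors acquire \CPU on the exchange instead of via the conversion contract, so no new \CPU is minted while Task Givers keep burning \CPU for tasks --- the supply contracts and $\USD(\CPU)$ rises. That supply-side argument needs nothing beyond ordinary task consumption and is insensitive to the frictions you rightly worry about. Your alternative --- buy cheap \CPU, issue a private task to yourself, collect roughly $r$ \TRU in minted reward, and sell --- is a legitimate round trip (the paper explicitly licenses private tasks as the \CPU-to-\TRU direction, and states that the reward per computation step always equals one \CPU at the recipient's pricing), but it leans on exactly the operational costs you flag: stake lock-up, the price bonding delay, forced-error timing of payouts, and genuine input privacy. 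So your version buys a symmetric, quantitative two-sided bound $\lvert\USD(\CPU)-r\cdot\USD(\TRU)\rvert\le\delta$ at the price of a stronger low-friction assumption on the awkward leg, while the paper's version buys robustness at the price of only a qualitative ``tends towards.'' One small correction: in a private task no verification game need arise at all --- privacy ensures no one else \emph{can} credibly challenge --- so you should not describe the agent as ``winning'' a game, merely as being unopposed. Your closing caveats about $r$ drifting with the median and about which price actually moves are sensible and go beyond what the paper addresses.
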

\begin{proof}
If ever $\USD(\CPU) > r \cdot \USD(\TRU)$, then rational actors will convert rather than trade, hence the market does not support this pricing.    
On the other hand, if $\USD(\CPU) < r \cdot \USD(\TRU)$, then rational actors will use exchanges rather than converting.  Task Givers continue to burn \CPU while no new tokens are created, hence the \CPU supply decreases,  and therefore $\USD(\CPU)$ increases.  The claim follows.
\end{proof}

The remaining two cases, where exactly one of \TRU or \CPU is tradable are less interesting, however we include them for completeness.  In either case, exchanges add a sell option, but \TRU and \CPU prices remain incomparable.  In case only \TRU is tradable, then \CPU can only be obtained through the task conversion contract.  Then \CPU's prescribed functionality persists, and the \TRU price provides little distraction.  On the other hand, a tradable \CPU might make Solvers and Verifiers more inclined to convert from \TRU into \CPU, in which case they can simply modify their local prices according to the logic described above.

\section{Bootstrapping the network} \label{sec:bootstrapping}

The \CPU/\TRU lifecycle steps described in the previous subsection presumes the existence of tokens, yet none are present at network initialization.  It remains to derive the circumstances of genesis.  We describe a bootstrapping method which conveniently minimizes friction for participation and takes advantage of flexible token supply.  Sections~\ref{sec:whitelistinggame} and~\ref{sec:governancegame} provide mutually compatible options for initiating and maintaining a supply of \TRU and \CPU.  The methods described herein result in net inflation.

\subsection{Staking with external tokens} \label{sec:whitelistinggame}

For the purposes of initial distribution, in the short run, the protocol may permit Solvers and Verifiers to stake other tokens in place of \TRU, using the same method outlined in Section~\ref{sec:pricingrewards}, and Task Givers to burn tokens other than \CPU as payment for issuing tasks under ad-hoc pricing.  While the protocol permits withdrawal of stakes, Solvers and Verifiers would still retrieve deposits in the same currency in which they originally staked.  In case the external tokens used already have a broad, liquid distribution, Truebit immediately becomes an accessible, public system.  In order to facilitate Solver and Verifier participation and legitimate use, the system restricts private tasks to those issued with \CPU.  Once a sufficient initial distribution has been established so that Task Givers, Solvers, and Verifiers can access small quantities of \TRU, this initial process would end.

Offering staking services opens a potential source of revenue with which to finance the development team's operations.  The team can benefit from various business transaction structures, including fiat currency payments, value-in-kind services, and investment.  Through this short-term program, external tokens enjoy greater demand through increased utility, visibility, and obliteration of supply via tasking.

Since the only way to obtain \TRU tokens is through Solver and Verifier rewards, the post-whitelist staking mechanism above enables an apolitical distribution of tokens devoid of management decisions, centralized trust, or passive income, in contrast to other common techniques such as exchanges, premines, initial coin offerings, \cite{TBB17}, or airdrops~\cite{merklemine}.  Because ``an airdrop may constitute a sale or distribution of securities''~\cite{secframework}, the apparent lack of investment contract in this distribution method may offer regulatory benefits.   Even for cases where an external token use is restricted or ``locked'' for regulatory compliance reasons, whitelisting it in Truebit can potentially increase its utility, thereby making the external token less like a security.

Following the remark made in Section~\ref{sec:pricingtasks} regarding external tasking conversions, the development team chooses relevant tokens \XYZ on the same blockchain as Truebit and may bound tasking and/or staking functionality in at least one of the following parameters: the number of \XYZ tokens useable per unit time, total time allowed for whitelisting \XYZ tokens, and total quantity of \XYZ whitelisted.  Figure~\ref{fig:model} illustrates the context for these governance interactions which, as we shall discuss in Section~\ref{sec:governancegame}, need not be controlled by a centralized entity.

\subsection{The governance game} \label{sec:governancegame}

Let us now consider a tokenized version of the governance mechanism outlined in Section~\ref{sec:whitelistinggame}.  Assume some initial distribution of governance tokens, called \DAO, with democratic voting power (e.g. \cite{aragonvoting}, \cite{daostack})  restricted to the following set of items:
\begin{enumerate}[label=\arabic*.]
\item whitelisting of variable tokens for use in the tasking conversion contract, and

\item assigning a maximum useable allotment of each such variable token.
\end{enumerate}
As \DAO tokens convert over time, the protocol roughly tends towards decentralization as the Truebit protocol's political hierarchy, along with the net \TRU/\CPU token supply variance, fully dissolve upon the ultimate \DAO conversion. On the other hand, the last remaining \DAO tokens, which could belong to anyone who acquired even a small holding, increasingly resemble unique souvenirs and may be slow to disappear.

The governance token, or \DAO, expressly does not manage Truebit protocol upgrades (see Section~\ref{sec:upgrade}), and in particular cannot assign \TRU or \CPU minting rights to any new smart contract.  We enumerate the \DAO token's crypto-idiosyncratic features which resemble the independent, concurrent material in \cite{Zur19}.
\newpage
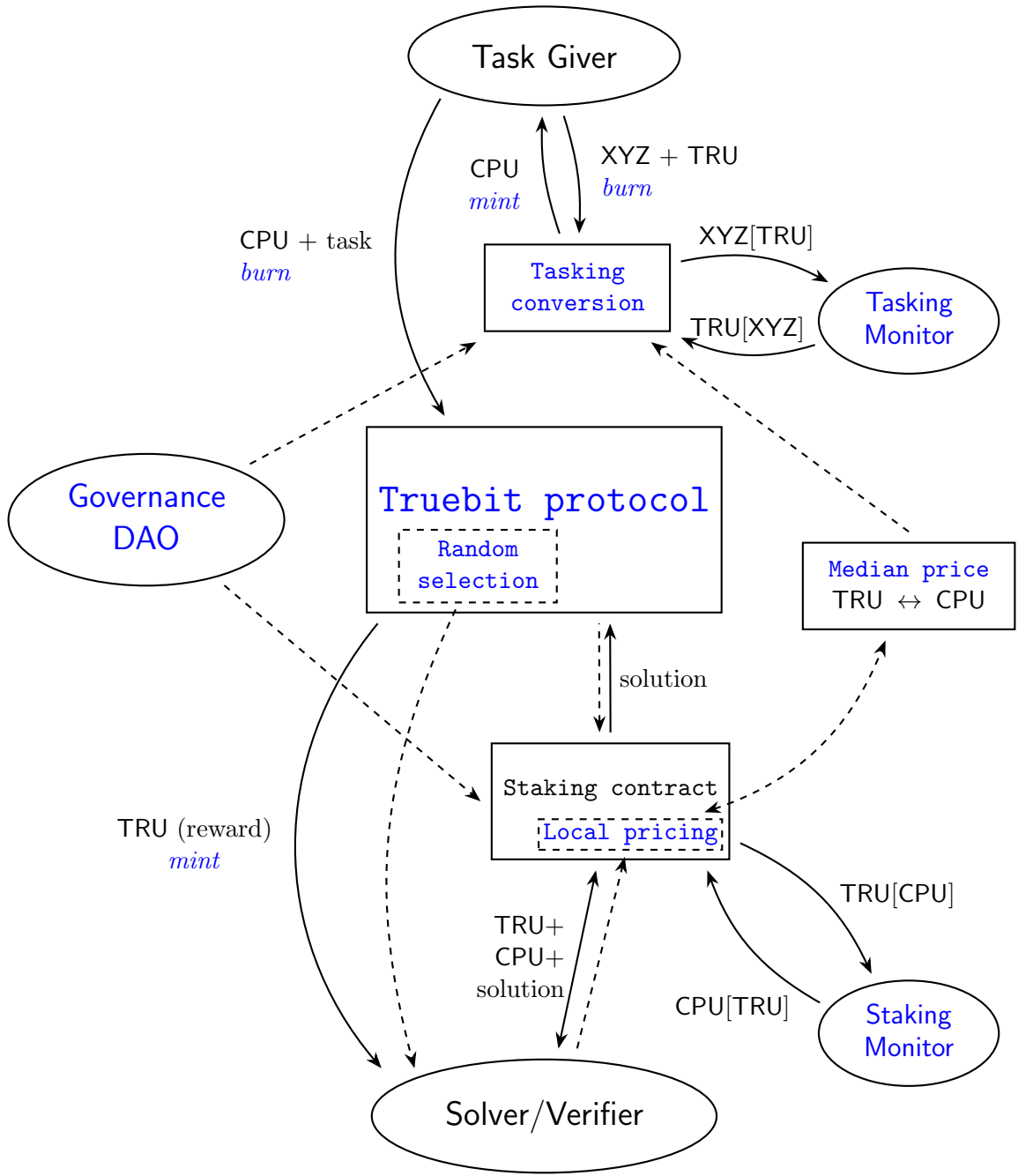
\begin{figure}[h!]
\vspace{-0.89in}
\hspace{-10ex}
\begin{tikzpicture}
  [shorten <=0pt,shorten >=0pt,>={Stealth[length=7pt]},thick=3pt]
   \tikzstyle{contract} = [rectangle, minimum size=1cm, inner sep=5pt, outer sep=1ex, font={\normalsize \tt}, fill=none, thick, draw]
   \tikzstyle{subcontract} = [rectangle, inner sep=2pt, outer sep=3pt, font={\normalsize \tt}, fill=none, thick, dashed, draw]
   \tikzstyle{entity} = [ellipse, minimum size=1cm, inner sep=10pt, outer sep=1ex, font={\normalsize \sf}, fill=none, thick, draw]

   \node[contract, minimum size = 2.8cm, label = {[yshift=1.5ex]center:\hyperref[sec:review]{\LARGE \tt Truebit protocol}}] (protocol) at (0,0) {\phantom{\hyperref[sec:review]{\LARGE Truebit protocol}}};
   \node[subcontract] (random) at (-1,-.7) {\hyperref[sec:random]{\begin{tabular}{c}{Random} \\selection\end{tabular}}};
   \node[entity] (taskgiver) at (0,7) {\Large Task Giver};
   \node[entity] (solver) at (0,-9) {\Large Solver/Verifier};
   \node[contract, minimum size= 1.75cm, label = {[yshift=1ex]center:\hyperref[sec:macroo]{\tt Staking contract}}] (staking) at (1,-4.25) {\phantom{\hyperref[sec:macro]{Staking contract}}};
   \node[contract] (median) at (5.5,-1) {\begin{tabular}{c}\hyperref[sec:pricingtasks]{Median price}\\ \TRU $\leftrightarrow$ \CPU\end{tabular}};
   \node[entity, inner sep=2pt] (governance) at (-6,0) {\Large \hyperref[sec:governancegame]{\begin{tabular}{c}Governance\\\DAO\end{tabular}}};
   \node[subcontract] (localpricing) at (1.3,-4.75) {\hyperref[sec:pricingrewards]{Local pricing}};
   \node[contract] (options) at (0.5,3.5) {\hyperref[sec:pricingtasks]{\begin{tabular}{c}Tasking\\ conversion\end{tabular}}};
   \node[entity, inner sep=2pt] (stakingobserver) at (5.5,-7.75) {\large \hyperref[sec:pricingrewards]{\begin{tabular}{c}Staking\\Monitor\end{tabular}}};
   \node[entity, inner sep=2pt] (taskingobserver) at (5.5,3) {\large \hyperref[sec:pricingtasks]{\begin{tabular}{c}Tasking\\Monitor\end{tabular}}};   
   
   \draw[->] ([xshift= 2ex]protocol.south west) to  [bend right = 40]  node[left] {\begin{tabular}{c}\TRU (reward) \\ \hyperref[sec:pricingrewards]{\textit{mint}} \end{tabular}}([xshift=-3ex] solver.north west);
   \draw[<->] (staking) to node[left,xshift=1ex] {\begin{tabular}{r}$\TRU+$\\$\CPU+$\\ solution\end{tabular}} (solver);
   \draw[->] (staking.north) to node[right] {solution} ([xshift = 6ex]protocol.south);
   \draw[<-,dashed] ([xshift = -1ex]staking.north) to ([xshift = 5ex]protocol.south);
   \draw[<->,dashed] (localpricing) to [bend right=25] (median);
   \draw[->,dashed] ([xshift=3ex]solver.north) to (localpricing);
   \draw[->,dashed] (governance) to (staking.west);
   \draw[->,dashed] (random) to [bend right = 20] (solver.north west);
   \draw[->,] (taskgiver.south west) to [bend right=30] node[left,shift={(0ex,0ex)}]{\begin{tabular}{l}$\CPU$ + task\\\hyperref[sec:pricingrewards]{\textit{burn}}\end{tabular}} (protocol);
   \draw[->,dashed] (governance) to (options);
   \draw[->] (taskgiver) to [bend left = 10] node[right]{\begin{tabular}{l}$\XYZ$ + $\TRU$\\\hyperref[sec:pricingtasks]{\textit{burn}}\end{tabular}} (options);
   \draw[->,dashed] (median.north) to [bend right = 0] (options);
   \draw[->] (options) to [bend left = 10] node[left, shift={(0ex,-1ex)}]{\begin{tabular}{c}\CPU\\ \hyperref[sec:pricingtasks]{\textit{mint}}\end{tabular}} (taskgiver);
   \draw[->] (stakingobserver) to [bend left = 20] node[below left, shift={(4.5ex,-3.5ex)}]{$\CPU[\TRU]$} ([xshift=-3ex]staking.south east);
   \draw[<-] (stakingobserver) to [bend right = 20] node[right, xshift=1ex]{$\TRU [\CPU]$} (staking);
    \draw[->] (taskingobserver) to [bend left = 20] node[above, yshift=.25ex]{$\TRU[\XYZ]$} (options);
   \draw[<-] (taskingobserver) to [bend right = 20] node[above, yshift=-.25ex]{$\XYZ[\TRU]$} (options);
   \end{tikzpicture}
\caption{Illustration of Truebit token model.  Solid lines indicate data and token transfers while dotted lines indicate control features.  ``\XYZ'' is placeholder symbol for external, whitelisted tokens, and square brackets adjacent to Monitors should be read as ``respectively.''  Governance actions for tasking and staking include whitelisting external tokens and fixing their maximum useable allotments.} \label{fig:model}
\end{figure}
\newpage

\noindent
\begin{enumerate}[label=(\alph*)]
\item Each \DAO token, at the token holder's discretion, is eligible for a one-time conversion into either \TRU or \CPU.
\item Each \DAO token increases in its governance power over time because conversion decreases the total supply of \DAO tokens.
\item Later conversions receive a higher percentage of \TRU or \CPU tokens.  More quantitatively, suppose that a DAO Holder has $p$ fraction of the original \DAO token supply and that $c$ fraction of the \DAO tokens have thus far been converted.  Without loss of generality, assume that the Holder converts to \TRU and that $N$ \TRU tokens currently remain.  Then the token holder receives
\begin{equation} \label{eqn:bonus}
f(p,c) = (p + 5c^2 p)\cdot N~\TRU,
\end{equation}
where ``$5c^2p$'' denotes the monotonic \emph{bonus} for holding \DAO long-term.
\end{enumerate}

Let consider what might happen if $p$ represented the fraction of currently remaining \DAO tokens rather than the fraction relative to the original supply of \DAO tokens.  As a simple illustration, assume $f(p) = pN$, with $N=100~\TRU$, and suppose that the entirety of \DAO holders decide to convert 50\% of their tokens to \TRU.  They get 50 \TRU, bringing the total \TRU supply to 150.  Now the \DAO holders again convert 50\%.  This time, they receive $50\% \cdot 150~\TRU = 75~\TRU$.  Hence, by converting 50\% and then 50\% again they receive a grand total of $$50~\TRU +75~\TRU= 125~\TRU,$$ whereas if they converted 100\% at first, they would only have received 100~\TRU.  Clearly obtaining an unbounded number of tokens from a finite amount of \DAO is undesirable.  We remark that the squared term ``$5c^2$'' in Equation~\ref{eqn:bonus} incentivizes long-term holding by back-loading the conversion bonus, while the number ``5,'' on the other hand, is a somewhat arbitrary constant.

Don Gossen acutely noted that one could effectively apply the governance conversion process above without a whitelisting feature.

\subsection{The upgrade game} \label{sec:upgrade}
Over time, network software may require amendments due to new fixes or features.  While governance (Section~\ref{sec:governancegame}) enables some foreseeable modifications, such as additions to the staking whitelist, others may require changes to the underlying, immutable smart contracts.  In the long run, the system requires a versatile update mechanism which preserves network effects and minting functionality without resorting to management decisions from a central authority.  To this end, we propose a simple framework which invites new fixes and features without breaking legacy contract code.
\begin{enumerate}
\item A new version of Truebit protocol smart contracts is deployed.

\item The new contracts accept both \CPU and \TRU tokens but mint new $\CPU'$ and $\TRU'$ tokens for conversions and rewards respectively.  The latter maintain tasking and staking functionality alongside legacy tokens.

\item \DAO tokens can convert to the new $\CPU'$ and $\TRU'$ using the new contracts as well as to legacy tokens using the old contracts (Section~\ref{sec:governancegame}).
\end{enumerate}
For purposes of greater governance diversity, a $\DAO'$ token might exist as well.  Upgrades can also introduce new governance rules, e.g.\ imposing a limit on the maximum number of simultaneous tokens available for the staking whitelist.  We remark that \TRU, \CPU, and \DAO holders all have incentive to follow this upgrade pattern which preserves the original features of their tokens while enhancing functionality.  In a successful upgrade, \TRU (resp.\ \CPU) gradually phases out of circulation in favor of $\TRU'$ ($\CPU'$), while \TRU and \CPU remain useable on both legacy and upgraded systems.  When for example ``$\CPU' = \GPU$'' represents a distinct hardware application, $\CPU$ and $\CPU'$ might incomparably coexist with distinct pricing contracts.  For additional flexibility, one can even transfer \DAO tokens, and hence network effects, across blockchains via a two-way peg \cite{TSB18} or similar mechanism.

\paragraph{Practical example.}
Concurrently with the production of this paper, Harz, Gudgeon, Gervais, and Knottenbelt published a description of a reputation-based scheme called Balance which dynamically adjusts cryptocurrency deposits~\cite{HGGK19}.  The authors mention that the Truebit protocol could benefit from including this design feature.  The upgrade mechanism above incentivizes developers to experiment with Balance's adaptations to Truebit's staking mechanism.

\section{Aleatorics} \label{sec:aleatorics}

We conclude with two upgrade candidates for the Truebit protocol (Section~\ref{sec:review}) which crucially interface Verifiers with randomness and private tasks.

\subsection{Martingale strategy}
Let us examine the effect of private tasks on network security.  Recall  that both the Task Giver's cost and Verifier's reward depends on the computational complexity of the underlying task~(Section~\ref{sec:pricingrewards}, \cite[Section~5.4]{TR17}), and consider a Task Giver who submits a private task which costs 1~\CPU followed by a private task which costs 2~\CPU, then 4~\CPU, and so on doubling the cost each time.  Since the task is private, the Task Giver can verify it himself without concern for splitting rewards with other Verifiers. The Task Giver can even provide bogus answers through a Solver Sybil without spending real compute cycles or risk of getting caught.  Eventually a forced error occurs, at which point the Task Giver more than recoups his expenses for this sequence by challenging the forced error with a payout proportional to the complexity of the last task times the reciprocal of the forced error rate.

The forced error method saves \emph{gas}, or execution resources from the underlying blockchain network, in comparsion to distributing rewards after each task.  Forced errors, however, must be sufficiently infrequent so as to make execution of the martingale strategy above prohibitively expensive because the martingale vulnerability exists whenever some rational attacker has enough capital to sustain the rounds of doubling capital.  On the other hand, if gas is not a concern, the protocol can avoid Martingale attacks entirely by setting the forced error rate equal to~1, that is, imposing a forced error on every single task.

\subsection{Random selection} \label{sec:random}
In parallel to the traditional exponential dropoff payout for Verifiers discussed in Section~\ref{sec:review}, we add a linear reward payout scheme in the spirit of~\cite{KR18}.  In addition to facilitating predictable reward payouts, this gas-efficient approach also improves security.  This \emph{random selection}, which we outline below, is uniformly compatible with private tasks.  The Truebit protocol requires that tasks pass verification for both traditional exponential dropoff and linear random selections.
\begin{enumerate}
\item When a Task Giver submits a task requesting~$k$ Verifiers, priced proportionally, interested Solvers and Verifiers each execute it locally.

\item The protocol then randomly selects, according to the mining nonce, $k+1$ Solver/Verifiers among those with registered deposits.  Each selected Solver/Verifier has the opportunity to commit the hash of a solution plus some committed random bits.  The latter random bits prevent lazy copying of others' solutions.

\item Among the responding subset from Step~2, the protocol randomly selects a Solver, and the remaining nodes become Verifiers.

\item Solvers and Verifiers reveal their solutions.  Verification games ensue in case of disagreement until either one Verifier wins or all Verifier challenges have been refuted.  Unselected Solvers and Verifiers who submit solutions suffer penalties.

\item Absent any dispute, Solvers and Verifiers split the reward equally.
\end{enumerate}
The protocol never penalizes selected Solver/Verifiers who cannot access the input data.  Indeed, participation is optional in Step~2.  Moreover, The construction achieves gas efficiency since only Solver/Verifiers selected in Step~2 who commit solutions actually interact with the blockchain.  The protocol guarantees a predictable reward due to a bounded number of selected participants, and the only real cost for non-selected Solvers and Verifiers are watching events in a contract and performing off-chain computations.

Dandelion, Sam Moelius, and the Arbitrum paper~\cite{KGCWF18} each described variants of the following vulnerability on Truebit's exponential dropoff scheme.  An attacker convincingly, publicly declares to all Verifiers,
\begin{quote}
``I will challenge all tasks of the form \textit{ABC} roughly 100 times in case of a forced error.  Don't bother verifying these, as the reward is negligible.''
\end{quote}
The attacker's goal is to get bogus answers onto the blockchain, and this becomes easier when he convinces other Verifiers not to participate.  Random selection makes such an attack less effective, even under the assumption of entirely rational miners, as the attacker's Sybil attack has no effect on the incentives of chosen Solver/Verifiers.

\paragraph{Acknowledgements.} We thank Ian Kovalenko for help with the \DAO token design.

\bibliographystyle{plain}
\bibliography{stableCPU}

\end{document}